\documentclass[11pt,a4paper]{article}

\usepackage[margin=1in]{geometry}
\usepackage{amsmath,amssymb,amsthm,mathtools}
\usepackage{bm}
\usepackage{booktabs}
\usepackage{graphicx}
\usepackage{microtype}
\usepackage{hyperref}
\usepackage{cleveref}
\usepackage[numbers,sort&compress]{natbib}

\newtheorem{definition}{Definition}[section]

\newtheorem{proposition}[definition]{Proposition}
\newtheorem{theorem}[definition]{Theorem}

\newtheorem{conjecture}[definition]{Conjecture}

\theoremstyle{remark}
\newtheorem{remark}[definition]{Remark}

\newtheorem{prediction}[definition]{Experimental prediction}

\newcommand{\R}{\mathbb{R}}

\newcommand{\E}{\mathbb{E}}
\newcommand{\Prob}{\mathbb{P}}
\newcommand{\Var}{\operatorname{Var}}
\newcommand{\Cov}{\operatorname{Cov}}
\newcommand{\MI}{\operatorname{I}}
\newcommand{\TE}{\operatorname{TE}}

\newcommand{\Tr}{\operatorname{Tr}}
\newcommand{\diag}{\operatorname{diag}}

\newcommand{\dd}{\,\mathrm{d}}

\newcommand{\state}{\bm{x}}
\newcommand{\stimulus}{\bm{u}}
\newcommand{\response}{\bm{y}}
\newcommand{\weights}{\bm{G}}

\newcommand{\laplacian}{\bm{L}}
\newcommand{\noise}{\bm{\xi}}

\title{\textbf{Mathematical formalisation of information processing and intelligence in proteinoid networks}}

\author{
Andrew Adamatzky and Panagiotis Mougkogiannis\\
Unconventional Computing Laboratory\\
University of the West of England, Bristol, UK
}

\date{}

\begin{document}

\maketitle

% ------------------------------------------------------------
% Abstract
% ------------------------------------------------------------

\begin{abstract}

Proteinoid microspheres are electrically active assemblies of thermally polymerised amino acids that exhibit spontaneous oscillations, spike-like events, stimulus-dependent responses, adaptation and collective dynamics.  We formulate these observations as an applied-mathematical problem and develop a unified dynamical framework for information processing in  proteinoid networks. Individual microspheres are represented by nonlinear excitable or oscillatory systems, whereas interactions are described by weighted adaptive graphs and, in the spatial limit, reaction--diffusion dynamics. The framework combines signal propagation, attenuation, fading memory, adaptive coupling, synchronisation, information transfer, reservoir
readout, robustness and energetic constraints. We define a task-relative proteinoid intelligence functional from normalised measures of stimulus information, memory, adaptation, directed information transfer, task performance and robustness, penalised by energetic and structural costs. Analytical results establish a joint-response information bound, an upper bound on linear memory for stable linear networks, and existence of an
optimal admissible coupling matrix under compact constraints. Dimensionless groups identify the relative time scales and coupling regimes governing the problem in nonlinear and adaptive network dynamics.

\end{abstract}

\noindent
\textbf{Keywords:}
proteinoids;
proteinoid microspheres;
primitive intelligence;
excitable matter;
adaptive networks;
information theory;
unconventional computing;
chemical computing;
emergent computation

\section{Introduction}

Intelligence is conventionally associated with nervous systems, brains and artificial computational architectures. This association leaves open a more fundamental question: what is the minimal physical organisation required for a material system to sense, remember, adapt and generate context-dependent responses? The question is increasingly relevant to work on basal cognition, embodied cognition, morphological computation, active matter and material intelligence, all of which weaken the assumption that sophisticated information processing must begin with a nervous system \citep{PfeiferBongard2007,Lyon2015,Levin2019,Stepney2018,Nakajima2020,FieldsLevin2020,Dennett2017}. In these approaches, cognition-like capacities are treated not as all-or-nothing attributes but as experimentally separable abilities to detect regularities, retain state, coordinate distributed processes and alter future behaviour.

A complementary perspective comes from the physics of computation. Computation is necessarily instantiated in matter and therefore constrained by dynamics, noise, dissipation, memory and energy \citep{Landauer1961,Bennett1982}. This observation motivates physical and unconventional computing, in which the natural dynamics of a substrate perform transformations that would otherwise have to be explicitly simulated by conventional electronics \citep{Adamatzky2017,Stepney2018,Tanaka2019,Nakajima2020}. Chemical reaction--diffusion media, soft matter, oscillator networks and other nonlinear substrates demonstrate that computation can be distributed through the state evolution of matter rather than confined to a sequence of symbolic operations \citep{Murray2002,KeenerSneyd2009}. Proteinoids are especially interesting in this context because they combine chemical simplicity, spontaneous self-assembly and experimentally observable electrical dynamics.

\begin{figure}[!tbp]
    \centering
    \includegraphics[width=0.49\linewidth]{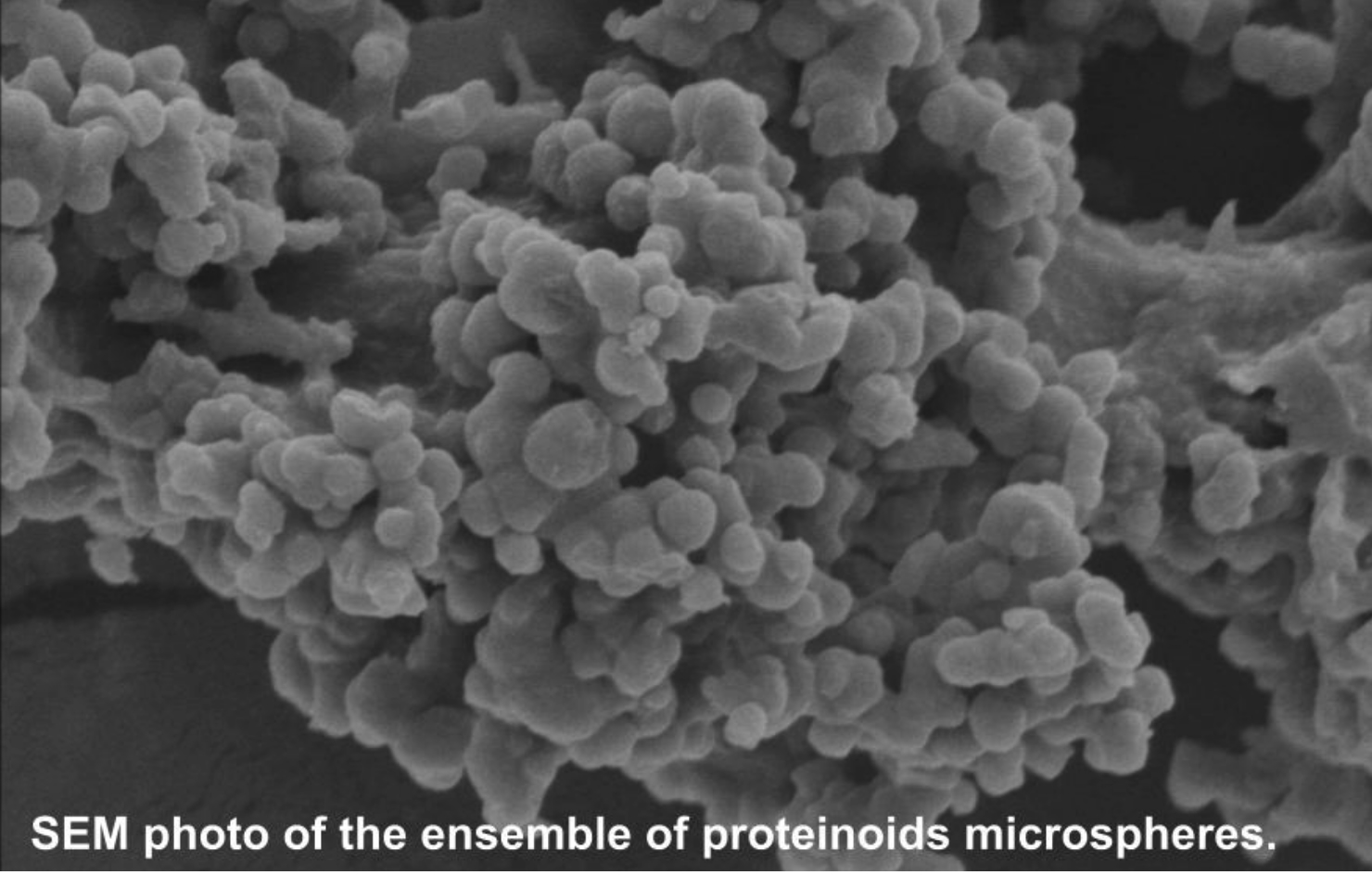}
     \includegraphics[width=0.49\linewidth]{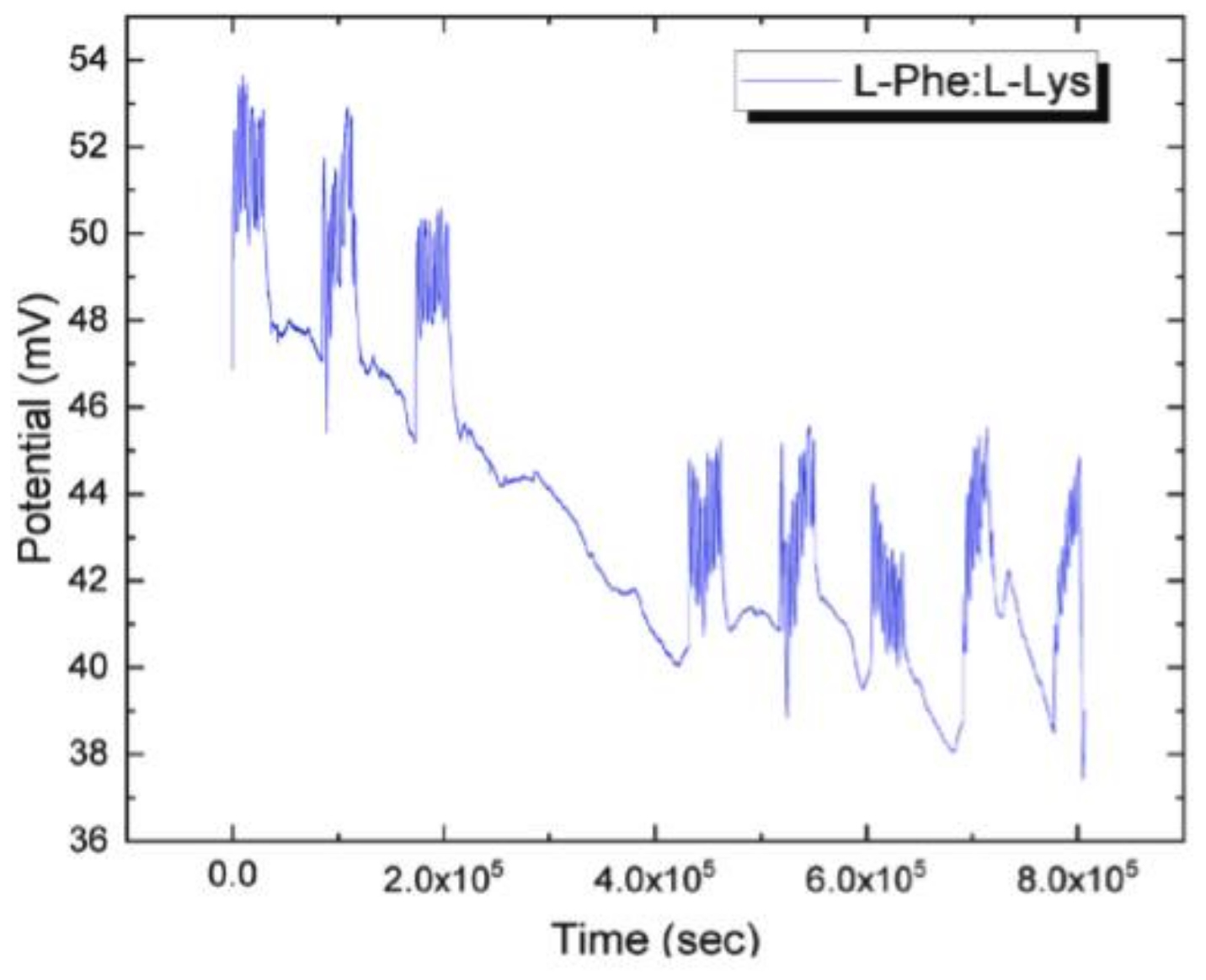}
    \caption{(a) Scanning electron microscopy (SEM) image of an ensemble of proteinoid microspheres. (b) Neuron-like chattering spiking activity recorded from an ensemble of proteinoid microspheres.}
    \label{fig:proteinoids}
\end{figure}

Proteinoids are thermally synthesised amino-acid polymers that can self-assemble into microspherical structures \citep{FoxHarada1958,FoxHarada1960,FoxHaradaVegotsky1959,Fox1965,FoxDose1977,Fox1992} (Fig.~\ref{fig:proteinoids}a). Their historical importance lies in research on chemical evolution and protocellular organisation, where thermal amino-acid polymers and microspheres were investigated as models for how organised, compartmentalised and dynamically active structures might arise without contemporary biological machinery \citep{Fox1965,FoxDose1977,Fox1992}. The present work shifts the emphasis from origins-of-life morphology to dynamics and computation: if a chemically simple microspherical assembly exhibits excitable and history-dependent electrical behaviour, it becomes possible to ask what forms of information processing can be realised before the appearance of neurons.

Experimental studies indicate that proteinoid ensembles can exhibit spontaneous electrical activity, oscillations, spike-like events, nonlinear responses to stimulation, adaptation and collective behaviour \citep{Przybylski1985,MougkogiannisAdamatzky2023waves,MougkogiannisAdamatzky2023logic,MougkogiannisAdamatzky2023electrical,MougkogiannisAdamatzky2024proto} (Fig.~\ref{fig:proteinoids}b). Logical operations have also been demonstrated in ensembles of proteinoid microspheres \citep{MougkogiannisAdamatzky2023logic}. These observations motivate the possibility that proteinoid systems provide experimentally accessible models of information processing emerging in non-neural matter. Importantly, the claim does not depend on treating every electrical spike as a cognitive event. The relevant issue is whether the dynamics support reproducible mappings between environmental perturbations, internal state trajectories and subsequent responses.

The distinction between activity and information is fundamental. A material may oscillate strongly while conveying little information about its environment; conversely, relatively small electrical changes may be highly informative if they are stimulus-specific and reproducible. Information theory therefore provides a natural language for separating mere activity from stimulus encoding, storage and transfer \citep{Shannon1948,CoverThomas2006,Schreiber2000,Lizier2012}. Likewise, dynamical-systems theory provides tools for distinguishing transient excitation, stable oscillation, synchronisation, multistability and history dependence \citep{Strogatz2000,Izhikevich2007}. These distinctions are essential if the word intelligence is to acquire quantitative rather than metaphorical meaning in a chemical substrate.

Collective oscillation is particularly relevant because proteinoid microspheres need not act as independent units. The classical theory of coupled oscillators developed by Kuramoto provides a minimal mathematical language for the emergence of macroscopic coordination from interacting oscillatory elements. Kuramoto's early analysis of self-entrainment in populations of coupled nonlinear oscillators established the canonical phase-coupling framework \citep{Kuramoto1975}; his subsequent monograph developed the statistical and dynamical theory of collective synchronisation \citep{Kuramoto1984}. Later work by Kuramoto and Battogtokh showed that nonlocal coupling can support coexistence of coherent and incoherent domains \citep{KuramotoBattogtokh2002}, demonstrating that collective organisation need not collapse into uniform global synchrony. These ideas, together with later analyses of the Kuramoto model and synchronisation on complex networks \citep{Acebron2005,Arenas2008,Rodrigues2016}, are directly relevant to proteinoid ensembles because useful computation may require a balance between coordinated propagation and preservation of differentiated local dynamics.

A second theoretical bridge is provided by reservoir computing. In echo-state and liquid-state approaches, a nonlinear dynamical medium transforms temporal inputs into high-dimensional transient states from which comparatively simple readouts can recover task-relevant information \citep{Jaeger2001,Maass2002,LukoseviciusJaeger2009}. Physical reservoir computing generalises this principle from artificial recurrent networks to material substrates whose intrinsic dynamics supply nonlinearity, fading memory and state-space expansion \citep{Tanaka2019,Nakajima2020}. This perspective is attractive for proteinoids because it does not require the microspheres to implement neuron-like components. It requires only that stimulation produces reproducible, history-dependent trajectories with sufficient separation and fading memory for a readout to exploit them.

The objective of this paper is not to attribute consciousness, cognition or semantic understanding to proteinoids. We use intelligence in a deliberately operational sense, asking whether a material system can discriminate between stimuli, transform and propagate signals, retain traces of previous states, modify later responses as a consequence of experience, integrate information across a network, and sustain task-dependent behaviour in the presence of noise and physical constraints. This usage is closer to a graded notion of basal or minimal cognition than to human intelligence \citep{Lyon2015,Levin2019,FieldsLevin2020}. These capacities are treated as measurable properties of dynamics rather than as analogies with nervous systems.

The central hypothesis is that intelligence in proteinoid networks does not reside in any single microsphere or single electrical event. This distributed interpretation is compatible with embodied and material-computation perspectives \citep{PfeiferBongard2007,Stepney2018,Nakajima2020}. It emerges from the interaction of nonlinear excitability, adaptive coupling, memory and network-level information integration. The mathematical development therefore moves from local state dynamics to propagation, history dependence, collective organisation and information processing, and finally to a task-relative functional intended to make comparisons between different proteinoid systems possible. In this sense, the paper asks not whether proteinoids are intelligent in an anthropomorphic sense, but under what measurable dynamical and informational conditions it becomes scientifically useful to describe an excitable chemical network as exhibiting a primitive form of intelligence.

From an applied-mathematics perspective, the contribution is fourfold. First,
we place local nonlinear dynamics, network coupling and spatial propagation
within a common state-space formulation. Second, we introduce explicit
variables for fading memory and adaptive coupling so that history dependence
is represented dynamically rather than inferred from response curves alone.
Third, we connect collective dynamics to measurable information-processing
quantities and to a constrained task-relative objective functional. Fourth,
we derive analytical bounds and dimensionless control parameters and specify
a parameter-estimation and numerical-validation programme. The emphasis is
therefore on a model class whose assumptions, observables and predictions can
be confronted with experimental time series.

\section{Proteinoid networks as dynamical systems}

Let a proteinoid network consist of \(N\) interacting microspheres. The internal state of the network at time \(t\) is \(\state(t) = \bigl( \bm{x}_1(t), \bm{x}_2(t), \ldots, \bm{x}_N(t) \bigr),\)
where \(\bm{x}_i(t)\in\R^m\) denotes the physiological and electrical state
of microsphere \(i\). 
The external stimulus is represented by
\(\stimulus(t)\in\mathcal{U},\)
and the measurable response by
\(\response(t) = h\bigl(\state(t)\bigr) + \bm{\varepsilon}(t),\)
where \(h\) is an observation function and \(\bm{\varepsilon}(t)\) represents
measurement noise.

\begin{definition}[Operational proteinoid intelligence]
A proteinoid network is operationally intelligent with respect to a class of
stimuli \(\mathcal{U}\) and a class of tasks \(\mathcal{T}\) if it can:  acquire statistically significant information about external stimuli; retain information about previous stimuli or internal states; modify its future responses as a consequence of experience; produce distinguishable and task-relevant outputs; maintain these capacities under perturbation and noise.
\end{definition}

This definition is functional rather than anthropomorphic. It does not require representation, consciousness or symbolic reasoning. The operational definition becomes useful only when it is attached to a physical state space. We therefore represent the same system as a dynamical graph, so that the electrical state of each microsphere and the changing interactions between microspheres enter a common mathematical description.

Following standard complex-network and adaptive-network formalisms \citep{Newman2003,Boccaletti2006,GrossBlasius2008}, a proteinoid network is represented by a weighted graph
\(\mathcal{G}(t) = \bigl( \mathcal{V}, \mathcal{E}(t), \weights(t) \bigr),\)
where
\(\mathcal{V} = \{1,2,\ldots,N\}\)
is the set of microspheres and
\(\weights(t) = \bigl[g_{ij}(t)\bigr]_{i,j=1}^{N}\)
is the time-dependent coupling matrix.
The graph Laplacian is
\(\laplacian(t) = \diag\left( \sum_{j=1}^{N} g_{ij}(t) \right) - \weights(t).\)
The state of microsphere \(i\) evolves according to
\(\frac{\dd \bm{x}_i}{\dd t} = \bm{F}_i \bigl( \bm{x}_i, \bm{u}_i, \bm{\theta}_i \bigr) + \sum_{j=1}^{N} g_{ij}(t) \bm{H} \bigl( \bm{x}_j-\bm{x}_i \bigr) + \bm{\xi}_i(t),\)
Here \(\bm{F}_i\) describes the intrinsic dynamics of microsphere \(i\), \(\bm{u}_i(t)\) is its local stimulus, \(\bm{\theta}_i\) collects the parameters governing its response, \(\bm{H}\) specifies how the state of a
neighbour enters the coupling term, and \(\bm{\xi}_i(t)\) represents stochastic fluctuations. This decomposition is intentionally broad: it allows the same network description to accommodate excitable, oscillatory or mixed
local dynamics while keeping the topology and strength of interactions explicit. In matrix form,
\(\dot{\state} = \bm{F}(\state,\stimulus,\bm{\theta}) - \bigl( \laplacian(t)\otimes \bm{K} \bigr) \state + \noise(t).\)

The graph formalism specifies how microspheres are connected but not how an individual microsphere responds. At the local level we require dynamics capable of reproducing thresholded excitation, recovery, spontaneous oscillation and nonlinear stimulus dependence. The following models are intended as minimal phenomenological descriptions rather than claims about a unique microscopic mechanism.
A minimal two-variable excitable model, motivated by the FitzHugh--Nagumo class \citep{FitzHugh1961,Nagumo1962,Izhikevich2007}, is
\(\tau_V \frac{\dd V_i}{\dd t} = V_i - \frac{V_i^3}{3} - W_i + I_i(t) + \sum_{j=1}^{N} g_{ij}(V_j-V_i),;\quad \tau_W \frac{\dd W_i}{\dd t} = V_i + a_i - b_i W_i,\)
where \(V_i\) is an electrical variable and \(W_i\) is a slow recovery or adaptation variable.

The input current is decomposed as
\(I_i(t) = I_i^{\mathrm{base}} + I_i^{\mathrm{stim}}(t) + \sigma_i \xi_i(t).\) When the dominant observation is oscillatory rather than pulse-like, the same local dynamics can be represented at a coarser level by phase and amplitude variables. For oscillatory microspheres, a phase-amplitude model may be used, drawing on standard coupled-oscillator descriptions \citep{Kuramoto1975,Kuramoto1984,Acebron2005,Arenas2008,Rodrigues2016}:

\(\dot{r}_i = \mu_i r_i - r_i^3 + \sum_{j=1}^{N} g_{ij}r_j \cos(\phi_j-\phi_i),;\quad \dot{\phi}_i = \omega_i + \sum_{j=1}^{N} k_{ij} \sin(\phi_j-\phi_i) + \Gamma_i u_i(t).\)

Independent of the particular state-space representation, the experimentally accessible input--output relation can be summarised by a saturating response law.
A saturating stimulus--response relationship can be represented by a Hill-type law \citep{Hill1910}:

\begin{equation}
R(u) = R_0 + \frac{R_{\max}u^n}{K^n+u^n}, \label{eq:hill-response}
\end{equation}

where \(K\) is the half-response stimulus and \(n\) controls the degree of
nonlinearity.

\begin{definition}[Excitability]
Microsphere \(i\) is excitable if a subthreshold perturbation decays towards
a stable state, whereas a suprathreshold perturbation produces a large
transient excursion before recovery.
\end{definition}

\section{Signal propagation}

For two recording locations \(i\) and \(j\), the propagation delay is approximated by
\(\tau_{ij} \approx \frac{d_{ij}^{\mathrm{eff}}}{v_{ij}},\)
where \(d_{ij}^{\mathrm{eff}}\) is the effective path length and \(v_{ij}\) is the propagation velocity.
Propagation is inseparable from loss. A signal that reaches a distant part of the network with a measurable delay will generally also arrive with reduced amplitude.

The signal amplitude at distance \(d\) may be described by
\begin{equation}
A(d) = A_0 \exp\left(-\frac{d}{\lambda}\right), \label{eq:attenuation}
\end{equation}
where \(\lambda\) is the characteristic attenuation length.
A more general power-exponential law is
\(A(d) = A_0 d^{-\alpha} \exp\left(-\frac{d}{\lambda}\right).\)

The preceding expressions treat propagation phenomenologically. When the proteinoid assembly is regarded as a spatially distributed excitable medium, a reaction--diffusion representation provides a continuous counterpart \citep{Murray2002,KeenerSneyd2009}.
For a spatially distributed proteinoid medium,

\(\frac{\partial V}{\partial t} = D_V\nabla^2V + f(V,W) + I(\bm{r},t),;\quad \frac{\partial W}{\partial t} = D_W\nabla^2W + g(V,W).\)

\begin{conjecture}[Propagation threshold]
Suppose the local dynamics are excitable and the coupling between adjacent microspheres exceeds a critical value \(g_c\). Then a suprathreshold event initiated at one node can generate a propagating excitation through a
connected component of the network. The value of \(g_c\) is model-dependent and should be determined by travelling-wave analysis, numerical continuation or direct experimental estimation rather than treated as a universal constant.
\end{conjecture}

\section{Memory and adaptation}

Propagation provides spatial integration, but intelligence in the operational
sense adopted here also requires temporal dependence. We therefore introduce
adaptation, recovery and memory directly into the dynamical description.

The response to a persistent stimulus may follow

\(R(t\mid u) = R_{\infty}(u) + \bigl( R_0(u)-R_{\infty}(u) \bigr) \exp\left(-\frac{t}{\tau_a}\right),\)

where \(\tau_a\) is the adaptation time.
After the stimulus is removed, the system need not return instantaneously to its original response regime. Recovery can be characterised by the response to a later test stimulus.
The response to a test stimulus applied after a delay \(\Delta t\) is

\begin{equation}
R_{\mathrm{test}}(\Delta t) = R_{\mathrm{base}} - \Delta R \exp\left(-\frac{\Delta t}{\tau_r}\right), \label{eq:recovery-law}
\end{equation}

where \(\tau_r\) is the recovery time.
A convenient way to represent history dependence is to introduce an internal memory variable that integrates previous stimulation and feeds back onto excitability. Such fading-memory descriptions are also central to nonlinear systems theory and reservoir computing \citep{BoydChua1985,Jaeger2001,Maass2002}.
A memory state \(m_i(t)\) may evolve according to

\begin{equation}
\tau_m\dot{m}_i = -m_i + \Phi\bigl(u_i(t),V_i(t)\bigr). \label{eq:memory-state}
\end{equation}

The internal memory modifies excitability:
\(a_i(t) = a_i^{(0)} + \eta_m m_i(t).\)
Equivalently, memory may be expressed as a convolution:
\(m_i(t) = \int_{-\infty}^{t} K_m(t-s)u_i(s)\dd s,\)
with memory kernel
\(K_m(t) = \frac{1}{\tau_m} \exp\left(-\frac{t}{\tau_m}\right) \mathbf{1}_{t\geq 0}.\)

\begin{definition}[Fading memory]
A proteinoid system has fading memory if the influence of a stimulus applied
at time \(t-s\) on the present state decreases as \(s\rightarrow\infty\).
\end{definition}

\begin{proposition}[Exponential memory decay]
For the linear memory equation in \cref{eq:memory-state}, the contribution of
an isolated stimulus to \(m_i(t)\) decays exponentially with characteristic
time \(\tau_m\).
\end{proposition}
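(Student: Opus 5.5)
The plan is to treat \cref{eq:memory-state} as a linear first-order ODE in \(m_i\), with forcing \(\Phi\bigl(u_i(t),V_i(t)\bigr)\) regarded as a given locally integrable function of time. Here \(\Phi\) may depend nonlinearly on \(u_i\) and \(V_i\), but it enters the equation only as a source term. I will solve the equation by variation of constants. Multiplying by the integrating factor \(e^{t/\tau_m}/\tau_m\) and integrating from an initial time \(t_0\) gives
\[
m_i(t) = m_i(t_0)\,e^{-(t-t_0)/\tau_m} + \int_{t_0}^{t} \frac{1}{\tau_m} e^{-(t-s)/\tau_m}\,\Phi\bigl(u_i(s),V_i(s)\bigr)\dd s .
\]
Letting \(t_0\to-\infty\) for bounded \(m_i\) recovers the convolution form with kernel \(K_m\) stated above, now with \(u_i\) replaced by \(\Phi\). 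This representation shows that \(m_i\) is a superposition of forcing contributions, each weighted by \(K_m(t-s)\).

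Next I will make ``the contribution of an isolated stimulus'' precise. Fix a reference forcing \(\Phi_0(s)\), for example the baseline without the stimulus. Consider a perturbed forcing \(\Phi_0+\delta\Phi\), where \(\delta\Phi\) is supported in a bounded window \([t_1,t_2]\) and is integrable. By linearity, the difference of the two solutions is
\[
\delta m_i(t)=\int_{t_1}^{t_2} K_m(t-s)\,\delta\Phi(s)\dd s.
\]
For \(t\ge t_2\) this becomes
\[
\delta m_i(t)=e^{-(t-t_2)/\tau_m}\,\delta m_i(t_2),
\qquad
|\delta m_i(t)|\le \frac{1}{\tau_m}\|\delta\Phi\|_{L^1}\,e^{-(t-t_2)/\tau_m}.
\]
So the contribution decays exponentially with rate \(1/\tau_m\), which is the claimed characteristic time. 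For an impulsive stimulus, \(\delta\Phi = c\,\delta(s-t_1)\), the contribution is exactly \(c\,K_m(t-t_1)\). This exhibits the decay law directly and verifies the fading-memory property of the preceding definition.

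The main obstacle is interpretive rather than computational. \(\Phi\) depends on \(V_i\), and \(V_i\) is itself influenced by \(m_i\) through \(a_i(t)=a_i^{(0)}+\eta_m m_i\). In the fully coupled system, therefore, a stimulus also changes the future forcing, and the decay need not be purely \(e^{-t/\tau_m}\). I will handle this by stating explicitly that the proposition concerns the linear memory equation with \(\Phi\) treated as an exogenous signal, which is exactly the scope of ``the linear memory equation in \cref{eq:memory-state}''. I will then remark that in the feedback-coupled case the same argument applies only after the induced perturbation \(\delta\Phi\) has itself become negligible, or to linear order when the feedback loop is stable.
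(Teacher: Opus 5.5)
Your proposal is correct and follows the paper's own (implicit) reasoning: the paper gives no separate written proof and relies on the convolution representation $m_i(t)=\int_{-\infty}^{t}K_m(t-s)\,u_i(s)\,\mathrm{d}s$ with exponential kernel $K_m(t)=\tau_m^{-1}e^{-t/\tau_m}\mathbf{1}_{t\geq 0}$, stated just before the proposition. Your variation-of-constants step derives that representation, generalising $u_i$ to an exogenous $\Phi$, and your bound $|\delta m_i(t)|\leq \tau_m^{-1}\|\delta\Phi\|_{L^1}e^{-(t-t_2)/\tau_m}$ makes the claim quantitative, while your caveat correctly limits it to the case where $\delta\Phi$ ends with the stimulus, since a $V_i$-dependent forcing (especially with feedback through $a_i(t)=a_i^{(0)}+\eta_m m_i(t)$) can have its decay governed by slower coupled modes rather than by $\tau_m$ alone.
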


\section{Adaptive coupling and network plasticity}

Memory need not be confined to the internal state of a microsphere. Repeated
activity may also modify the effective couplings of the network, producing a
slower structural memory superimposed on the faster electrical dynamics,
analogous at a formal level to activity-dependent and adaptive-network
plasticity \citep{Hebb1949,GrossBlasius2008}.
The coupling between microspheres evolves according to

\begin{equation}
\frac{\dd g_{ij}}{\dd t} = \eta \left|q_{ij}(t)\right|^{\alpha} - \lambda_g g_{ij}, \label{eq:reinforcement}
\end{equation}

where \(q_{ij}(t)\) is the effective signal flow from node \(j\) to node
\(i\).

A correlation-based rule is
\(\frac{\dd g_{ij}}{\dd t} = \eta \Psi\bigl(V_i,V_j\bigr) - \lambda_g g_{ij},\)
with, for example, \(\Psi(V_i,V_j) = V_iV_j.\)
Unbounded reinforcement is physically implausible, so the plasticity law can be constrained by a finite coupling scale.

To ensure \(0\leq g_{ij}\leq g_{\max}\), we have

\(\frac{\dd g_{ij}}{\dd t} = \eta \left(1-\frac{g_{ij}}{g_{\max}}\right) \left|q_{ij}\right|^{\alpha} - \lambda_g g_{ij}.\)

At still longer timescales, plasticity may alter not only the strength but also the existence of connections.
The probability of forming a new connection is

\(\Prob\bigl( A_{ij}(t+\Delta t)=1 \mid A_{ij}(t)=0 \bigr) = 1- \exp\left[ -\rho_{ij}(t)\Delta t \right],\)

where

\(\rho_{ij}(t) = \rho_0 \exp\left(-\frac{d_{ij}}{\ell_g}\right) \Theta\bigl(C_{ij}(t)-C_c\bigr).\)

Connections may be pruned when
\(g_{ij}(t)<g_{\mathrm{prune}}.\)

\begin{proposition}[Activity-dependent path stabilisation]
Under \cref{eq:reinforcement}, pathways that repeatedly carry larger signal flows acquire greater asymptotic coupling strength than inactive pathways, provided reinforcement exceeds decay.
\end{proposition}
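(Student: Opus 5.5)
The equation \cref{eq:reinforcement} is, for each fixed pair \((i,j)\), a scalar linear ODE driven by the nonnegative forcing \(s_{ij}(t) := \eta\,|q_{ij}(t)|^{\alpha}\). The plan is to solve it explicitly by variation of constants,
\begin{equation*}
g_{ij}(t) = g_{ij}(0)e^{-\lambda_g t} + \eta\int_0^t e^{-\lambda_g (t-s)}\,|q_{ij}(s)|^{\alpha}\dd s ,
\end{equation*}
and then compare the asymptotic behaviour of this expression across pathways. To make ``repeatedly carry larger signal flows'' precise, I will assume that the flow on each pathway has a well-defined long-time average forcing. The natural choice is either that \(|q_{ij}|^{\alpha}\) is \(T\)-periodic (repeated stimulation), or, more generally, that its exponentially weighted average converges to a constant \(\bar Q_{ij}\). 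An \emph{inactive} pathway is one with \(q_{ij}\equiv 0\), or more generally \(\bar Q_{ij}\) smaller than that of the comparison pathway.

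The first step is to show that the initial-condition term vanishes as \(t\to\infty\), because \(\lambda_g>0\). The second step is to evaluate the convolution term in each setting.

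\begin{itemize}
\item For constant forcing \(|q_{ij}|^\alpha=Q_{ij}\), the convolution gives the explicit fixed point \(g_{ij}^{*}=\eta Q_{ij}/\lambda_g\), which is globally attracting.
\item For periodic forcing, a standard Poincaré-map argument (a contraction with factor \(e^{-\lambda_g T}\)) gives a unique attracting periodic solution whose time average equals \(\eta\bar Q_{ij}/\lambda_g\), with \(\bar Q_{ij}\) the period mean of \(|q_{ij}|^\alpha\).
\end{itemize}

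In both cases the asymptotic coupling, pointwise or on average, is \(\eta\bar Q_{ij}/\lambda_g\). This quantity is strictly increasing in \(\bar Q_{ij}\), and since \(x\mapsto x^\alpha\) is increasing for \(\alpha>0\), it is also increasing in the flow amplitude. Inactive pathways therefore converge to \(0\), and active ones to a strictly larger value. A comparison-principle remark adds a pointwise statement: if \(|q_{ij}(t)|\ge |q_{kl}(t)|\) for all \(t\) and the initial couplings are ordered the same way, the ordering \(g_{ij}(t)\ge g_{kl}(t)\) is preserved for all \(t\).

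The main obstacle is interpretive rather than technical. The clause ``provided reinforcement exceeds decay'' has to be given a precise meaning, and \(q_{ij}\) in general depends on \(g_{ij}\) itself through the network dynamics. I plan to read the proviso as \(\eta\bar Q_{ij}>\lambda_g g_{ij}\) on the relevant range. Then the asymptotic level \(\eta\bar Q_{ij}/\lambda_g\) exceeds the initial coupling, so the pathway is genuinely reinforced rather than merely decaying more slowly. For the coupled case, I would state the proposition under the assumption that the flow is prescribed, i.e. the plasticity is slow compared with the electrical dynamics. In that regime \(q_{ij}\) can be treated as an exogenous signal, which is a timescale separation the later dimensionless analysis supports, and the result is applied on the slow timescale. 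I will flag explicitly that a fully coupled version, in which \(q_{ij}=q_{ij}(g)\) with monotone dependence, would require a fixed-point argument that I do not attempt.
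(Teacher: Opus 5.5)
The paper states this proposition without proof; the text moves straight on to the next section. So there is no argument of the authors to compare with. The only justification implicit in the text is that, for a steady flow, \cref{eq:reinforcement} relaxes to $g_{ij}^{*}=\eta|q_{ij}|^{\alpha}/\lambda_g$, which increases with $|q_{ij}|$. Your proof is correct and is a careful version of exactly this computation, extended by the periodic case (the contraction $g\mapsto e^{-\lambda_g T}g+c$) and the comparison principle. You should state $\eta,\lambda_g,\alpha>0$ as explicit hypotheses.

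There are two small points to fix. First, the hypothesis that the exponentially weighted average of $|q_{ij}|^{\alpha}$ converges is not more general than periodicity: for nonconstant periodic forcing that average oscillates. A single hypothesis covering constant, periodic and stationary flows comes from integrating \cref{eq:reinforcement} over $[0,t]$:
\[
\frac{1}{t}\int_0^t g_{ij}(s)\dd s=\frac{\eta}{\lambda_g}\,\frac{1}{t}\int_0^t |q_{ij}(s)|^{\alpha}\dd s-\frac{g_{ij}(t)-g_{ij}(0)}{\lambda_g t}.
\]
Hence, for bounded flows whose $|q_{ij}|^{\alpha}$ has a Ces\`aro mean $\bar Q_{ij}$, the time-averaged coupling tends to $\eta\bar Q_{ij}/\lambda_g$. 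Second, ``larger flow'' must be measured by the mean of $|q_{ij}|^{\alpha}$ or by pointwise domination. For $\alpha\neq 1$ a larger mean of $|q_{ij}|$ does not imply a larger $\bar Q_{ij}$, so your monotonicity ``in the flow amplitude'' holds only for rescalings of a fixed waveform.

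Your treatment of the proviso deserves more thought. Under the exogenous-flow reading the ordering holds for every $\eta,\lambda_g>0$. The clause ``reinforcement exceeds decay'' therefore plays no role in the comparative claim, and your reading of it on ``the relevant range'' is admissible but ad hoc.

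The proviso becomes substantive when the flow through an edge scales with its coupling, e.g.\ $q_{ij}=g_{ij}(V_j-V_i)$ as in the diffusive coupling term. If the voltage difference is treated as prescribed on the slow time scale, this is still a scalar equation in $g_{ij}$ and needs no fixed-point argument. For $\alpha=1$, \cref{eq:reinforcement} becomes $\dot g_{ij}=(\eta|V_j-V_i|-\lambda_g)g_{ij}$, so the proviso is exactly the growth condition $\eta|V_j-V_i|>\lambda_g$. The unbounded law then has no finite asymptote. The saturating law, however, gives the stable equilibrium $g_{ij}^{*}=g_{\max}\bigl(1-\lambda_g/(\eta|V_j-V_i|)\bigr)$, which is positive precisely under the proviso and increasing in $|V_j-V_i|$. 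Below threshold, $g_{ij}\to 0$ and the edge eventually falls under $g_{\mathrm{prune}}$. Adding this case would explain both the proviso and the word ``stabilisation'' in the statement.
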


\section{Collective dynamics and synchronisation}

For phase variables,
\(\dot{\phi}_i = \omega_i + \sum_{j=1}^{N} g_{ij} \sin(\phi_j-\phi_i).\)
The Kuramoto order parameter, originating in Kuramoto's theory of collective phase synchronisation \citep{Kuramoto1975,Kuramoto1984,Acebron2005,Arenas2008,Rodrigues2016}, is
\(re^{\mathrm{i}\psi} = \frac{1}{N} \sum_{j=1}^{N} e^{\mathrm{i}\phi_j},\)
where \(r\in[0,1]\) measures global synchronisation.
Synchronisation is only one aspect of collective organisation. A useful network should also balance integration between different regions with preservation of differentiated local dynamics.
Let the network be partitioned into modules
\(\mathcal{M}_1,\ldots,\mathcal{M}_K\). A simple integration measure is

\(\mathcal{F}_{\mathrm{int}} = \frac{ \sum_{a\neq b} \sum_{i\in\mathcal{M}_a} \sum_{j\in\mathcal{M}_b} g_{ij} }{ \sum_{i,j}g_{ij} }.\)

A segregation measure may be based on network modularity \citep{NewmanGirvan2004,Newman2006}:

\(Q = \frac{1}{2m} \sum_{i,j} \left( g_{ij} - \frac{k_ik_j}{2m} \right) \delta(c_i,c_j).\)

The balance between propagation and diversity suggests examining the network near transitions between weak and globally coordinated activity, motivated by work linking critical regimes to dynamic range and information processing \citep{BeggsPlenz2003,KinouchiCopelli2006,ShewPlenz2013}.
Let the branching ratio be

\(\sigma = \frac{ \E[\text{number of events at }t+\Delta t] }{ \E[\text{number of events at }t] }.\)

The regimes are

\(\sigma<1 \quad\text{subcritical}, \qquad \sigma=1 \quad\text{critical}, \qquad \sigma>1 \quad\text{supercritical}.\)

\begin{conjecture}[Critical intelligence conjecture]
Proteinoid networks exhibit maximal dynamic range and information transmission near a transition between subcritical and supercritical collective activity.
\end{conjecture}

\section{Information processing}

Collective dynamics become computationally relevant when they carry
information about stimulation, retain information about past states and
transmit information between parts of the network. These properties can be
quantified without assigning semantic content to the electrical signals.
Let \(U\) denote the stimulus class and \(Y\) the measured response. Using the standard information-theoretic definition \citep{Shannon1948,CoverThomas2006}, the mutual information is

\begin{equation}
\MI(U;Y) = \sum_{u,y} p(u,y) \log \frac{p(u,y)}{p(u)p(y)}. \label{eq:mutual-information}
\end{equation}

For continuous variables,
\(\MI(U;Y) = \int p(u,y) \log \frac{p(u,y)}{p(u)p(y)} \dd u\,\dd y.\)

Mutual information measures dependence between stimulus and response but does not identify directed interactions inside the network. For this purpose we use transfer entropy \citep{Schreiber2000,Bossomaier2016}.
The transfer entropy from node \(j\) to node \(i\) is

\(\TE_{j\rightarrow i} = \sum p \left( x_i^{t+1}, \bm{x}_i^{(k)}, \bm{x}_j^{(l)} \right) \log \frac{ p \left( x_i^{t+1} \mid \bm{x}_i^{(k)}, \bm{x}_j^{(l)} \right) }{ p \left( x_i^{t+1} \mid \bm{x}_i^{(k)} \right) }.\)

The total directed information flow is 
\(\mathcal{T} = \sum_{i\neq j} \TE_{j\rightarrow i}.\)
Directed transfer should be distinguished from storage. A node may carry predictive information about its own future even when little information is being transferred from its neighbours.
Following the active-information-storage formalism \citep{Lizier2012}, active information storage at node \(i\) is
\(\mathcal{A}_i = \MI \left( \bm{x}_i^{(k)}; x_i^{t+1} \right).\)
Network information storage is
\(\mathcal{A} = \frac{1}{N} \sum_{i=1}^{N}\mathcal{A}_i.\)

For experimental classification tasks it is also useful to quantify directly how well response distributions produced by different stimuli separate in feature space.
For stimulus classes \(u_a\) and \(u_b\), define the separability
\(D_{ab} = \frac{ \left\| \bm{\mu}_a-\bm{\mu}_b \right\|^2 }{ \Tr(\bm{\Sigma}_a+\bm{\Sigma}_b) },\)
where \(\bm{\mu}_a\) and \(\bm{\Sigma}_a\) are the mean response and
covariance under stimulus \(u_a\).

\begin{definition}[Discriminative capacity]
The discriminative capacity of the network over a stimulus set
\(\mathcal{U}\) is

\(\mathcal{D} = \frac{2}{ |\mathcal{U}| \bigl( |\mathcal{U}|-1 \bigr) } \sum_{a<b}D_{ab}.\)
\end{definition}

The fading-memory variables introduced above describe a mechanism; memory capacity describes what that mechanism permits the network to reconstruct. Let \(u(t-k)\) be a delayed input and \(\widehat{u}_k(t)\) its reconstruction from the current network state.
The memory coefficient at delay \(k\) is

\(MC_k = \frac{ \Cov^2 \left( u(t-k), \widehat{u}_k(t) \right) }{ \Var\bigl(u(t-k)\bigr) \Var\bigl(\widehat{u}_k(t)\bigr) }.\)

Following the memory-capacity approach developed for recurrent and reservoir systems \citep{Jaeger2002memory,White2004,Dambre2012}, the total linear memory capacity is
\(MC = \sum_{k=1}^{K}MC_k.\)
A nonlinear memory capacity may be defined by reconstructing nonlinear
functions of previous inputs:
\(NMC = \sum_{k=1}^{K} \sum_{p=2}^{P} MC_{k,p}.\)

\begin{conjecture}[Network memory advantage]
Under non-degenerate coupling and heterogeneous relaxation times, the total
memory capacity of a proteinoid network may exceed the mean memory capacity
of its isolated microspheres.
\end{conjecture}

\section{Learning and experience-dependent dynamics}

Let \(R_n(u)\) denote the response to the \(n\)-th presentation of stimulus
\(u\). Adaptation is quantified by
\(\mathcal{A}_{\mathrm{resp}} = \frac{ \left| R_N(u)-R_1(u) \right| }{ |R_1(u)|+\varepsilon }.\)
A particularly simple form of experience-dependent change is habituation, in which repeated presentation of the same stimulus reduces the response.
A simple habituation law is
\(R_n = R_{\infty} + \bigl( R_1-R_{\infty} \bigr) e^{-(n-1)/n_a}.\)
The opposite tendency, sensitisation, corresponds to progressive enhancement of the response under repeated stimulation. Sensitisation may be represented by
\(R_n = R_1 + \Delta R_{\max} \left( 1-e^{-(n-1)/n_s} \right).\)

History dependence becomes more interesting when the response to one stimulus is modified by its repeated temporal association with another.
For two stimuli \(u_a\) and \(u_b\), associative modification is present if
\(p \left( Y\mid u_a,\mathcal{H}_{ab} \right) \neq p \left( Y\mid u_a,\mathcal{H}_{a} \right),\)
where \(\mathcal{H}_{ab}\) denotes a history in which \(u_a\) and \(u_b\)
were repeatedly paired.

\section{Proteinoid networks as physical reservoirs}

These adaptive dynamics can be interpreted computationally when stimulation
is treated as input, the evolving electrical state as a physical
transformation, and selected observables as output. This interpretation
follows the broader traditions of unconventional, chemical and physical
computation \citep{Adamatzky2017,Stepney2018,Nakajima2020}.
The proteinoid network implements a time-dependent transformation
\(\mathcal{F}_{\mathcal{G}} : \stimulus_{[0,T]} \longmapsto \response_{[0,T]}.\)
For a finite task,
\(\widehat{z} = \mathcal{R} \bigl( \state(T) \bigr),\)
where \(\mathcal{R}\) is a readout function.

For discrete tasks, a readout can convert continuous electrical trajectories into Boolean or categorical outputs.
For binary inputs \(u_1,u_2\in\{0,1\}\), a Boolean output is
\(y = \Theta \left( \bm{w}^{\mathsf{T}} \bm{\phi}(\state) - \vartheta \right),\)
where \(\bm{\phi}(\state)\) is a feature vector extracted from the electrical
response.

For a finite labelled data set, the classification error is
\(\epsilon_{\mathrm{task}} = \frac{1}{M} \sum_{m=1}^{M} \mathbf{1} \left[ \widehat{z}_m\neq z_m \right].\)
For temporal tasks, the same network can be interpreted as a physical reservoir whose transient state separates input histories before a comparatively simple readout is applied \citep{Jaeger2001,Maass2002,LukoseviciusJaeger2009,Tanaka2019,Nakajima2020}.
The network state satisfies
\(\state(t+\Delta t) = \mathcal{F} \bigl( \state(t), \stimulus(t) \bigr),\)
and the output is
\(\response(t) = \bm{W}_{\mathrm{out}} \bm{\phi} \bigl( \state(t) \bigr).\)

\begin{definition}[Separation property]
A proteinoid network satisfies the separation property if distinct input histories produce distinguishable internal states.
\end{definition}

\begin{definition}[Approximation property]
A proteinoid network satisfies the approximation property if a readout from its internal states can approximate a specified class of input--output functions to arbitrary prescribed accuracy.
\end{definition}

Computation under ideal conditions is insufficient for a material definition of intelligence. The same task must remain meaningful when the substrate is noisy, partially damaged or dynamically reorganised; robustness to component loss is a central issue in complex-network function \citep{Albert2000}. Let \(\mathcal{P}\) denote a perturbation, such as removal of nodes, disruption
of couplings or addition of noise. Define task performance before and after perturbation as \(P_0\) and \(P_{\mathcal{P}}\).
The robustness index is
\(\mathcal{R} = \frac{ P_{\mathcal{P}} }{ P_0 }.\)
If the network recovers with time, define
\(\mathcal{R}(t) = \frac{ P_{\mathcal{P}}(t) }{ P_0 }.\)
The recovery time is
\(\tau_{\mathrm{rec}} = \inf \left\{ t>0: \mathcal{R}(t)\geq 1-\epsilon \right\}.\)

\begin{definition}[Functional self-recovery]
A proteinoid network exhibits functional self-recovery if, following a finite perturbation, its task performance returns to within \(\epsilon\) of its pre-perturbation value without external reconstruction.
\end{definition}

\section{A quantitative functional for proteinoid intelligence}

The preceding quantities measure different aspects of behaviour and should
not individually be identified with intelligence. This multidimensional view
is compatible with information-processing approaches that distinguish
storage, transfer, transformation and task performance
\citep{Lizier2012,Dambre2012}. We therefore combine them into a task-relative
functional that rewards sensing, memory, adaptation, information transfer,
performance and robustness while penalising energetic and structural cost.
We define the intelligence functional

\begin{equation}
\mathcal{I}_{P} = \alpha_I \mathcal{S} + \alpha_M \mathcal{M} + \alpha_A \mathcal{A} + \alpha_T \mathcal{T} + \alpha_P \mathcal{P} + \alpha_R \mathcal{R} - \lambda_E \mathcal{E} - \lambda_C \mathcal{C}. \label{eq:intelligence-functional}
\end{equation}

Here,
$\mathcal{S} = \MI(U;Y)$, 
$\mathcal{M} = \frac{MC}{MC_{\max}}$,
$\mathcal{A} = \text{normalised adaptive capacity}$,\\
$\mathcal{T} = \text{normalised directed information transfer}$, 
$\mathcal{P} = 1-\epsilon_{\mathrm{task}}$,
$\mathcal{R} = \text{robustness or recovery score}$,
$\mathcal{E} = \text{energetic cost}$,
$\mathcal{C} = \text{structural or dynamical complexity cost}$.
All components \(\mathcal{S},\mathcal{M},\mathcal{A},\mathcal{T},
\mathcal{P},\mathcal{R},\mathcal{E},\mathcal{C}\) are represented by
dimensionless normalised quantities in the interval \([0,1]\). The weights
satisfy
\[
\alpha_I,\alpha_M,\alpha_A,\alpha_T,\alpha_P,\alpha_R,\lambda_E,\lambda_C
\geq 0.
\]
For comparisons in which the relative contributions of the positive terms
are to be interpreted as fractions of a fixed total reward, one may
additionally impose
\[
\alpha_I+\alpha_M+\alpha_A+\alpha_T+\alpha_P+\alpha_R=1.
\]
The latter convention is optional: in task-specific applications the weights
may instead be selected independently, provided that the same normalisation
and weighting conventions are retained across the systems being compared.

\begin{definition}[Proteinoid intelligence index]
The value of \(\mathcal{I}_{P}\) in \cref{eq:intelligence-functional} is the
proteinoid intelligence index for a specified task, environment, observation
interval and set of normalisation conventions.
\end{definition}

\begin{remark}
The index is task-relative. A network may exhibit high intelligence for one
class of stimuli and low intelligence for another.
\end{remark}

Once an intelligence functional has been defined, adaptive organisation can be expressed as a constrained optimisation problem. We do not assume that a proteinoid network performs explicit optimisation; rather, the variational formulation states the direction in which adaptive changes would have to move for task-relative performance to improve.
We hypothesise that adaptive proteinoid organisation tends towards states
that improve information-processing performance subject to physical
constraints. Energy--information constraints are fundamental to physical computation \citep{Landauer1961,Bennett1982}.

The optimisation problem is
\(\max_{\mathcal{G},\,\bm{\theta},\,\mathcal{R}} \quad \mathcal{I}_{P}\)
subject to
\(\dot{\state} = \bm{F} \bigl( \state, \stimulus, \weights, \bm{\theta} \bigr),;\quad 0 \leq g_{ij} \leq g_{\max},;\quad \mathcal{E} \leq E_{\max},;\quad \sum_{i,j}g_{ij} \leq G_{\max}.\)
An equivalent free-functional form is
\(\mathcal{J} = \MI(U;Y) + \beta_M\mathcal{M} + \beta_A\mathcal{A} + \beta_P\mathcal{P} - \lambda_E\mathcal{E} - \lambda_G\|\weights\|_1 - \lambda_D\mathcal{D}_{\mathrm{diss}}.\)

\begin{conjecture}[Information--energy trade-off]
For a fixed proteinoid composition and environmental regime, there exists an
intermediate level of coupling and activity that maximises information
processing per unit energetic cost.
\end{conjecture}

\begin{conjecture}[Adaptive performance improvement]
Under repeated structured stimulation, activity-dependent plasticity may
drive the effective network parameters towards regimes of increased
task-relative \(\mathcal{I}_{P}\), provided that the induced changes improve
one or more constituent measures without disproportionate energetic or
structural cost.
\end{conjecture}

The scalar functional is useful for comparison, but it should not obscure qualitative differences between simple reactivity and richer adaptive behaviour. We therefore distinguish a hierarchy of operational capacities.

We distinguish several mathematical levels of proteinoid intelligence.

\begin{description}

    \item[Level 0: Reactivity.]
    The output changes in response to stimulation:
    \(p(Y\mid U=u_a) \neq p(Y\mid U=u_b).\)

    \item[Level 1: Discrimination.]
    Distinct stimuli generate statistically distinguishable response
    distributions, with \(\MI(U;Y)>0\) and classification performance
    significantly exceeding an appropriate null or chance baseline.

    \item[Level 2: Memory.]
    Present responses depend on previous inputs:
    \(\MI \left( U_{t-k};Y_t \right) >0.\)

    \item[Level 3: Adaptation.]
    The input--output map changes as a consequence of experience:
    \(p(Y_t\mid U_t,\mathcal{H}_{t-1}) \neq p(Y_t\mid U_t).\)

    \item[Level 4: Integration.]
    Network-level performance exceeds that of isolated components:
    \(\mathcal{I}_{P}^{\mathrm{network}} > \frac{1}{N} \sum_{i=1}^{N} \mathcal{I}_{P}^{(i)}.\)

    \item[Level 5: Goal-directed computation.]
    The network exhibits a statistically significant experience-dependent
    improvement in performance on a specified task over repeated trials
    \(n\), without requiring the learning trajectory to be strictly
    monotonic.

\end{description}

These levels also make explicit a distinction that is central to the framework.
The model predicts that intelligence is not proportional to electrical activity alone. A network displaying large-amplitude oscillations may encode little information if all nodes behave identically. Conversely, a lower-amplitude network may possess greater intelligence if it supports diverse, reproducible and stimulus-dependent trajectories.
We therefore distinguish
\(\text{activity} \neq \text{information} \neq \text{intelligence}.\)
Activity is the existence of measurable dynamics. Information requires statistical dependence between environmental and internal variables. Intelligence additionally requires memory, adaptation, functional performance and robustness.

For conceptual illustration, the structure of
\cref{eq:intelligence-functional} can be represented schematically as

\[
\mathcal{I}_{P}
\sim
\underbrace{\mathcal{S}}_{\text{sensing}}
+
\underbrace{\mathcal{M}}_{\text{memory}}
+
\underbrace{\mathcal{A}}_{\text{adaptation}}
+
\underbrace{\mathcal{P}}_{\text{performance}}
+
\underbrace{\mathcal{R}}_{\text{robustness}}
-
\underbrace{\mathcal{K}}_{\text{cost}},
\]
where the proportionality-like symbol emphasises that this is a conceptual
summary rather than a second definition of the index.

\section{Theoretical results}

\begin{proposition}[Joint-response information bound]
Let \(Y_1,\ldots,Y_N\) be responses of individual microspheres and let
\(Y_{\mathrm{net}}=(Y_1,\ldots,Y_N)\) denote the complete joint response
vector, or more generally any representation from which every \(Y_i\) is
recoverable. Then
\[
\MI \left( U; Y_{\mathrm{net}} \right)
\geq
\max_i \MI(U;Y_i).
\]
\end{proposition}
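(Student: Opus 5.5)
The plan is to derive the bound from the data-processing inequality, or equivalently from the chain rule and nonnegativity of conditional mutual information.

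Fix an index \(i\). By hypothesis there is a measurable map \(\pi_i\) with \(Y_i=\pi_i(Y_{\mathrm{net}})\); for the plain joint vector this is just the coordinate projection. Then \(U\to Y_{\mathrm{net}}\to Y_i\) is a Markov chain. Indeed, conditional on \(Y_{\mathrm{net}}\), the variable \(Y_i\) is deterministic and hence independent of \(U\).

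The data-processing inequality \citep{CoverThomas2006} then gives \(\MI(U;Y_i)\leq \MI(U;Y_{\mathrm{net}})\). Since this holds for every \(i\in\{1,\ldots,N\}\), it holds for the maximum over \(i\), which is the claim.

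For the plain joint vector, I would also record a self-contained route via the chain rule:
\[
\MI(U;Y_{\mathrm{net}})=\MI(U;Y_i)+\MI\bigl(U;(Y_j)_{j\neq i}\mid Y_i\bigr)\geq \MI(U;Y_i),
\]
using nonnegativity of conditional mutual information, which is the expectation of a Kullback--Leibler divergence.

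The only step needing care is the general case, where \(U\) or \(Y\) are continuous as in the continuous form of \cref{eq:mutual-information}. There the chain rule must be justified without densities, since \(Y_{\mathrm{net}}\) need not have a joint density even when each \(Y_i\) does. I would handle this by defining mutual information as the supremum over finite measurable partitions of the discrete mutual information of the quantised variables, the Gelfand--Yaglom--Pinsker definition. Under that definition the inequality is immediate. Any finite partition of the range of \(Y_i\) pulls back through \(\pi_i\) to a finite partition of the range of \(Y_{\mathrm{net}}\), and the two quantisations yield identical discrete mutual information. So the supremum defining \(\MI(U;Y_{\mathrm{net}})\) ranges over a superset of the values appearing in \(\MI(U;Y_i)\).

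This also shows the bound holds with values in \([0,\infty]\). I would add a remark that equality can hold, for example when one microsphere's response is a sufficient statistic for \(U\), so the proposition does not assert a strict network advantage.
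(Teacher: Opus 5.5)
Your proposal is correct and follows essentially the same route as the paper: each $Y_i$ is a deterministic function of $Y_{\mathrm{net}}$, so $U\to Y_{\mathrm{net}}\to Y_i$ is a Markov chain, the data-processing inequality gives $\MI(U;Y_i)\leq\MI(U;Y_{\mathrm{net}})$, and you then take the maximum over $i$. Your chain-rule alternative and the partition-based (Gelfand--Yaglom--Pinsker) treatment of the continuous case are sound additions that the paper does not spell out.
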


\begin{proof}
Each individual response \(Y_i\) is a deterministic function of
\(Y_{\mathrm{net}}\). The Markov relation
\(U \rightarrow Y_{\mathrm{net}} \rightarrow Y_i\) therefore holds for every
\(i\), and the data-processing inequality gives
\(\MI(U;Y_i)\leq \MI(U;Y_{\mathrm{net}})\). Taking the maximum over \(i\)
yields the result.
\end{proof}

\begin{proposition}[Upper bound on memory]
For a stable linear \(N\)-dimensional network with a linear readout and
independent input, the total linear memory capacity is bounded by the
effective dimension of the observable state space.
\end{proposition}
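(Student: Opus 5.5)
We follow the classical argument of Jaeger for linear echo-state networks and adapt it to the observable state. Take the stable linear network in the form \(\state(t+1)=\bm{W}\state(t)+\bm{v}u(t)\) with spectral radius \(\rho(\bm{W})<1\). Let the input \(u(t)\) be i.i.d. with zero mean and variance \(\sigma_u^2\). Suppose the readout acts on an observable \(\bm{z}(t)=\bm{C}\state(t)\in\R^{d}\), so the effective dimension is \(d_{\mathrm{eff}}=\operatorname{rank}\Cov(\bm{z}(t))\leq\min(d,N)\). Stability lets us write the stationary solution as the convergent series \(\state(t)=\sum_{s\geq 0}\bm{W}^s\bm{v}\,u(t-s)\). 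It follows that \(\bm{z}(t)\) is stationary with covariance \(\bm{\Sigma}=\sigma_u^2\sum_{s\ge0}\bm{C}\bm{W}^s\bm{v}\bm{v}^{\mathsf{T}}(\bm{W}^s)^{\mathsf{T}}\bm{C}^{\mathsf{T}}\). Likewise \(\bm{c}_k:=\Cov(\bm{z}(t),u(t-k))=\sigma_u^2\bm{C}\bm{W}^k\bm{v}\) for \(k\geq 0\), with the paper's delays running over \(k\geq1\).

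The first step is to identify \(MC_k\) with the optimal linear reconstruction. For the best linear readout \(\widehat u_k=\bm{w}_k^{\mathsf{T}}\bm{z}\), the squared correlation coefficient in the definition of \(MC_k\) equals
\[
MC_k=\frac{\bm{c}_k^{\mathsf{T}}\bm{\Sigma}^{+}\bm{c}_k}{\sigma_u^2},
\]
where \(\bm{\Sigma}^{+}\) is the Moore--Penrose pseudoinverse. This is the standard fact that the maximal squared correlation between a scalar and linear functionals of a vector is \(\bm{c}^{\mathsf{T}}\bm{\Sigma}^{+}\bm{c}/\Var\). We would verify it by Cauchy--Schwarz in the inner product induced by \(\bm{\Sigma}\), restricted to its range, which contains every \(\bm{c}_k\).

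The second step sums over delays:
\[
MC=\sum_{k}MC_k=\Tr\Bigl(\bm{\Sigma}^{+}\sum_k \frac{\bm{c}_k\bm{c}_k^{\mathsf{T}}}{\sigma_u^2}\Bigr).
\]
Independence of the inputs gives \(\sum_{k\ge0}\bm{c}_k\bm{c}_k^{\mathsf{T}}/\sigma_u^2=\bm{\Sigma}\) exactly. Since every term is positive semidefinite, a sum over any subset of delays, including \(k=1,\dots,K\), is \(\preceq\bm{\Sigma}\). Therefore
\[
MC\leq\Tr(\bm{\Sigma}^{+}\bm{\Sigma})=\operatorname{rank}\bm{\Sigma}=d_{\mathrm{eff}}\leq N.
\]
This uses monotonicity of \(\bm{A}\mapsto\Tr(\bm{\Sigma}^{+}\bm{A})\) for \(\bm{A}\succeq0\), which holds because \(\bm{\Sigma}^{+}\succeq0\).

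The main obstacle is making the statement precise rather than the algebra. We must fix a single readout per delay, use stationarity and the pseudoinverse for rank-deficient \(\bm{\Sigma}\), and interpret the \(MC_k\) of the paper as the optimal linear reconstruction. A nonoptimal readout only lowers each \(MC_k\), so the bound persists.

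If measurement noise \(\bm{\varepsilon}\) enters the observation, we replace \(\bm{\Sigma}\) by \(\bm{\Sigma}+\bm{\Sigma}_\varepsilon\) in the denominator. The numerator terms are unchanged and still sum to at most \(\bm{\Sigma}\). The same trace argument then gives \(MC\leq\Tr\bigl((\bm{\Sigma}+\bm{\Sigma}_\varepsilon)^{+}\bm{\Sigma}\bigr)\), which is at most \(\operatorname{rank}\bm{\Sigma}\). Noise therefore only tightens the bound.

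For continuous time the same argument runs with the integral representation \(\int_0^\infty e^{\bm{A}s}\bm{B}u(t-s)\dd s\), sampled at the delays \(k\Delta t\).
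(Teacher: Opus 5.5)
Your argument is correct and is essentially the paper's proof made precise. The optimal linear readout is the orthogonal projection onto the span of the observables, \(\bm{\Sigma}^{+}\) performs the whitening, and \(\Tr(\bm{\Sigma}^{+}\bm{\Sigma})=\operatorname{rank}\bm{\Sigma}\) is exactly the rank bound the paper invokes. The one difference is in how you get \(\sum_k\bm{c}_k\bm{c}_k^{\mathsf{T}}/\sigma_u^2\preceq\bm{\Sigma}\): you derive it from the exact series identity of the linear recursion, whereas the paper's appeal to mutually independent delayed inputs amounts to a Bessel-type inequality. The Bessel version is what your closing continuous-time remark actually needs, since there the state is generally not a function of the samples \(u(t-k\Delta t)\) alone; also note the harmless lag shift \(\bm{c}_k=\sigma_u^2\bm{C}\bm{W}^{k-1}\bm{v}\), \(k\geq 1\), for your stated recursion \(\state(t+1)=\bm{W}\state(t)+\bm{v}u(t)\).
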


\begin{proof}
For a stable linear system driven by an independent scalar input, every
linear reconstruction of a delayed input is an orthogonal projection onto
the observable state space. After whitening the state covariance, the sum
of the squared correlations associated with mutually independent delayed
inputs cannot exceed the rank of that space. The sum of the linear memory
coefficients is therefore bounded by the effective observable dimension,
which is at most \(N\).
\end{proof}

\begin{conjecture}[Trade-off between stability and sensitivity]
Increasing coupling strength initially improves signal transmission and
collective sensitivity, but sufficiently strong coupling reduces state-space
diversity through synchronisation. The location and even the existence of a
single optimum depend on the distribution of intrinsic time scales, network
topology and the form of coupling; the conjecture is therefore intended as a
testable statement for specified model classes rather than as a universal
monotonicity theorem.
\end{conjecture}

\begin{theorem}[Existence of an optimal admissible network]
Assume that the intelligence functional
\(\mathcal{I}_{P}(\weights)\) is continuous on the compact admissible set
\(\Omega = \left\{ \weights: 0\leq g_{ij}\leq g_{\max}, \quad \sum_{i,j}g_{ij}\leq G_{\max} \right\}.\)
Then there exists at least one admissible coupling matrix
\(\weights^{\ast}\in\Omega\) such that

\(\mathcal{I}_{P}(\weights^{\ast}) = \max_{\weights\in\Omega} \mathcal{I}_{P}(\weights).\)
\end{theorem}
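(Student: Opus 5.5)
The plan is to apply the Weierstrass extreme value theorem. To do so I identify the space of coupling matrices with $\R^{N\times N}\cong\R^{N^2}$, equipped with any norm (all norms are equivalent in finite dimensions). I then verify that $\Omega$ is nonempty, closed and bounded, and finally combine compactness with the assumed continuity of $\mathcal{I}_{P}$.

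\textbf{Step 1: $\Omega$ is nonempty.} The zero matrix satisfies $0\le g_{ij}\le g_{\max}$ and $\sum_{i,j}g_{ij}=0\le G_{\max}$, provided $g_{\max},G_{\max}\ge 0$. I take this as implicit in the statement.

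\textbf{Step 2: $\Omega$ is closed.} I write $\Omega$ as a finite intersection of preimages of closed sets under continuous maps. Each coordinate map $\weights\mapsto g_{ij}$ is a continuous linear functional, so $\{0\le g_{ij}\le g_{\max}\}$ is the preimage of the closed interval $[0,g_{\max}]$. Likewise $\weights\mapsto\sum_{i,j}g_{ij}$ is linear, so $\{\sum_{i,j}g_{ij}\le G_{\max}\}$ is the preimage of $(-\infty,G_{\max}]$.

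\textbf{Step 3: $\Omega$ is bounded.} Every $\weights\in\Omega$ satisfies $\max_{i,j}|g_{ij}|\le g_{\max}$. Heine--Borel then gives compactness, which confirms the word ``compact'' in the hypothesis rather than merely assuming it.

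\textbf{Step 4: apply Weierstrass.} Since $\mathcal{I}_{P}$ is continuous, the image $\mathcal{I}_{P}(\Omega)\subset\R$ is compact. It is therefore closed and bounded, and it contains its supremum. Hence some $\weights^{\ast}\in\Omega$ attains $\max_{\weights\in\Omega}\mathcal{I}_{P}(\weights)$.

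I would add a remark that no uniqueness is claimed. I would also note that the real content lies in the continuity hypothesis, because $\mathcal{I}_{P}$ is built from mutual information, memory capacity and task error through the network dynamics.

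The only genuine obstacle is that continuity hypothesis, not the topological argument. If one wanted to justify it rather than assume it, a natural first attempt is the following. Each normalised component in \cref{eq:intelligence-functional} is bounded in $[0,1]$. The trajectories of the coupled system depend continuously on $\weights$ over a finite observation interval, by continuous dependence of ODE solutions on parameters when $\bm{F}$ is Lipschitz. One would then argue component by component that each quantity is continuous in $\weights$.

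That argument can fail for two reasons. The error $\epsilon_{\mathrm{task}}$ involves indicator functions and thresholds $\Theta$, so it need not be continuous. Mutual information is in general only lower semicontinuous. In either case I would weaken the conclusion: upper semicontinuity of $\mathcal{I}_{P}$ on the compact set $\Omega$ already suffices for a maximum to be attained.
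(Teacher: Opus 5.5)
Your argument is correct and is the paper's own proof: the extreme value theorem applied to the continuous $\mathcal{I}_{P}$ on compact $\Omega$. You also explicitly verify nonemptiness (the zero matrix, assuming $g_{\max},G_{\max}\geq 0$), closedness and boundedness, which the paper leaves implicit; one caution on your closing aside only: if mutual information is merely lower semicontinuous and enters $\mathcal{I}_{P}$ with weight $\alpha_I>0$, it makes $\mathcal{I}_{P}$ lower rather than upper semicontinuous, so your upper-semicontinuity fallback would not cover that case (and it is the hypothesis, not the conclusion, that would be weakened).
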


\begin{proof}
The result follows from the extreme value theorem because
\(\mathcal{I}_{P}\) is continuous and \(\Omega\) is compact.
\end{proof}

The theorem establishes existence only; it does not imply uniqueness,
global dynamical accessibility, or that the optimum lies in the interior of
the admissible parameter space.

\section{Dimensionless analysis}

The preceding results involve parameters with different physical units and
time scales. Dimensionless groups make it possible to compare regimes across
compositions, geometries and experimental platforms without fixing a
particular absolute scale.
Introduce a characteristic voltage \(V_c\), time \(T_c\), coupling
\(g_c\) and stimulus \(U_c\). Define
\(\widetilde{V} = \frac{V}{V_c}, \qquad \widetilde{t} = \frac{t}{T_c}, \qquad \widetilde{g} = \frac{g}{g_c}, \qquad \widetilde{u} = \frac{u}{U_c}.\)
Relevant dimensionless groups include
\(\Pi_1 = \frac{\tau_a}{\tau_e}, \text{adaptation-to-excitation ratio},;\quad \Pi_2 = \frac{\tau_m}{\tau_e}, \text{memory-to-excitation ratio},;\quad \Pi_3 = \frac{g}{g_c}, \text{normalised coupling},;\quad \Pi_4 = \frac{\lambda}{d_{\mathrm{net}}}, \text{relative attenuation length},;\quad \Pi_5 = \frac{\sigma_{\mathrm{noise}}}{A_{\mathrm{signal}}}, \text{inverse signal-to-noise ratio},;\quad \Pi_6 = \frac{\tau_g}{\tau_e}, \text{plasticity-to-electrical timescale ratio}.\)
The dynamical regimes of the system may be represented in a phase diagram
over selected pairs of these parameters.

\section{Parameter estimation and model comparison}

Let measurements consist of time series
\(\mathcal{D} = \left\{ \bm{y}^{(r)}(t_k), \bm{u}^{(r)}(t_k) \right\},\)
where \(r\) indexes experimental replicates.
Parameters are estimated by minimising
\(\widehat{\bm{\theta}} = \arg\min_{\bm{\theta}} \sum_{r,k} \left\| \bm{y}^{(r)}(t_k) - \widehat{\bm{y}}^{(r)} \bigl( t_k;\bm{\theta} \bigr) \right\|^2 + \lambda_{\theta} \mathcal{R}_{\theta} \bigl( \bm{\theta} \bigr).\)
A probabilistic formulation is
\(p(\bm{\theta}\mid\mathcal{D}) \propto p(\mathcal{D}\mid\bm{\theta}) p(\bm{\theta}).\)
Model comparison may use
\(\mathrm{AIC} = 2k-2\log\widehat{\mathcal{L}},\)
or
\(\mathrm{BIC} = k\log n-2\log\widehat{\mathcal{L}}.\)

\section{Testable predictions}

The purpose of the mathematical framework is not merely descriptive. Its
usefulness depends on whether it produces predictions that can be contradicted
or supported experimentally. The principal predictions are stated below.

\begin{prediction}[Nonlinear dose response]
Proteinoid responses will exhibit a saturating nonlinear dependence on
stimulus intensity consistent with \cref{eq:hill-response}.
\end{prediction}

\begin{prediction}[Distance-dependent attenuation]
The amplitude of propagated electrical events will decrease approximately
exponentially with effective path length, as in \cref{eq:attenuation}.
\end{prediction}

\begin{prediction}[History dependence]
The response to an identical test stimulus will depend significantly on the
timing and identity of previous stimuli.
\end{prediction}

\begin{prediction}[Recovery law]
Following repeated stimulation, response recovery will exhibit an
approximately exponential dependence on the inter-stimulus interval, as in
\cref{eq:recovery-law}.
\end{prediction}

\begin{prediction}[Network advantage]
Coupled microsphere ensembles will exhibit greater stimulus classification
accuracy and memory capacity than isolated microspheres under matched
experimental conditions.
\end{prediction}

\begin{prediction}[Intermediate coupling optimum]
Information transmission and task performance will be maximal at an
intermediate coupling strength: weak coupling will prevent integration,
whereas excessive coupling will reduce dynamical diversity through
synchronisation.
\end{prediction}

\begin{prediction}[Activity-dependent reinforcement]
Repeated use of particular signal pathways will increase their effective
coupling and reduce transmission delay or attenuation.
\end{prediction}

\begin{prediction}[Critical regime]
Dynamic range, information transfer and response diversity are predicted to
exhibit enhanced values, potentially reaching a maximum, near a transition
between quiescent and globally synchronised activity.
\end{prediction}

\begin{prediction}[Energetic trade-off]
The highest information-per-energy ratio will not coincide with the largest
absolute response amplitude.
\end{prediction}

\begin{prediction}[Damage recovery]
Following partial disruption of the network, functional performance will
recover more rapidly in adaptive networks than in networks with fixed
coupling.
\end{prediction}

\section{Numerical experiments}

The same predictions can be examined numerically before or alongside
laboratory validation. Numerical experiments should be used to determine
which behaviours are generic consequences of the model and which require
narrow parameter choices.

The mathematical model should be evaluated through a sequence of numerical
experiments that follows the logic of the theory rather than treating the
model as a single monolithic simulation. The first stage is to reproduce
spontaneous activity in isolated microspheres and to determine whether the
chosen local dynamics can reconstruct experimentally observed spike trains
and oscillations. The second stage introduces coupling and follows localised
stimulation through one-dimensional, two-dimensional and random networks,
measuring propagation and attenuation while coupling strength is varied. This
naturally leads to an analysis of synchronisation and to estimation of
stimulus discrimination, mutual information, and linear and nonlinear memory
capacities. Once these baseline properties are established, the same model can
be driven repeatedly to examine habituation, sensitisation and recovery, and
the coupling matrix can be allowed to evolve in order to test whether
activity-dependent plasticity changes information flow or task performance.
Finally, node and edge removal provide controlled perturbations from which
robustness and self-recovery can be measured, while optimisation of the
intelligence functional tests whether the competing requirements of
integration, diversity, memory, performance and energetic cost produce an
intermediate operating regime rather than a trivial maximum at the strongest
possible activity.

Network geometry must be varied systematically because topology is itself a
candidate determinant of memory, integration and robustness. Linear chains
and rings provide simple geometries in which propagation paths are readily
interpreted; two-dimensional lattices and random geometric graphs provide
closer approximations to spatially embedded assemblies; Erd\H{o}s--R\'enyi,
small-world and scale-free networks \citep{ErdosRenyi1959,WattsStrogatz1998,BarabasiAlbert1999,Newman2003} permit the effects of heterogeneous
connectivity to be separated from purely geometric effects. Where microscopy
or other structural measurements permit it, these idealised graphs should be
compared with experimentally reconstructed proteinoid networks. The purpose of
this comparison is not to nominate a universally optimal topology, but to
identify which dynamical and informational properties are robust to network
architecture and which depend critically on a particular organisation of
couplings.

For numerical reproducibility, each reported regime should be evaluated over
multiple initial conditions and noise realisations. Parameter sweeps should
report the explored intervals and resolution, and stochastic results should
be summarised by central estimates together with dispersion or confidence
intervals. Training and evaluation data for reservoir readouts should be
separated, and null or surrogate inputs should be included when estimating
information-theoretic quantities. These requirements are part of the
mathematical validation of the framework rather than optional implementation
details.

The same observables should be extracted from each topology and parameter
regime. At the level of electrical activity these include spike rate and the
inter-spike interval distribution. Spatial transmission is characterised by
propagation velocity and attenuation length, while collective organisation is
captured by the synchronisation index. Information processing is quantified
through mutual information and transfer entropy, memory through the memory
capacity, and task performance through classification accuracy. Robustness
measures the persistence or recovery of function after perturbation, and the
energy-normalised intelligence measure places these benefits against their
physical cost. Reporting these quantities together is essential because a
network with the largest electrical response need not be the network with the
largest information content, and neither need be the network with the best
task-relative intelligence.

\section{Model scope and identifiability}

The framework is deliberately phenomenological. Different local dynamical
models may reproduce similar voltage traces, and information-theoretic
statistics estimated from finite noisy recordings do not by themselves
identify a unique physical mechanism. Consequently, model selection should
compare alternative local dynamics, coupling laws and memory models under
the same data partitions and complexity penalties. Parameters should be
reported with uncertainty, and conclusions should be restricted to features
that remain stable under plausible alternative model specifications. This
distinction between dynamical adequacy and mechanistic identification is
particularly important when the framework is fitted to proteinoid
experiments.

\section{Discussion}

The framework developed here treats proteinoid intelligence as an emergent property of excitable, adaptive and information-processing matter. Its starting point is deliberately conservative. Electrical activity by itself is not taken as evidence of intelligence, nor is a complicated waveform assumed to imply computation. A proteinoid system becomes interesting in the present sense only when its dynamics carry reproducible information about an external condition, when part of that information persists long enough to influence later behaviour, and when the resulting state can be used to produce a task-dependent response. Mutual information, memory capacity, stimulus discrimination, adaptation, task performance, robustness and energetic cost therefore describe different aspects of the same process rather than independent badges of intelligence \citep{Shannon1948,CoverThomas2006,Schreiber2000,Lizier2012,Dambre2012}.

This position places proteinoids within a broader movement away from substrate-specific definitions of computation and cognition. Embodied and morphological approaches emphasise that information processing can be distributed through the physical organisation of a system rather than localised in an abstract controller \citep{PfeiferBongard2007}. Material-computation and physical-reservoir approaches go further by exploiting the dynamics of matter itself as the computational transformation \citep{Stepney2018,Tanaka2019,Nakajima2020}. Work on basal cognition similarly asks which adaptive capacities can exist in cells and multicellular systems without requiring a brain-like architecture \citep{Lyon2015,Levin2019,FieldsLevin2020}. Proteinoids occupy an unusually austere position in this landscape because their organisation is generated from thermally polymerised amino acids rather than from evolved neural tissue. They therefore offer an experimental route for asking how far dynamical prerequisites of intelligent behaviour can be pushed towards chemically simple matter.

Several time scales are central to this question. Excitation and oscillation describe relatively fast electrical dynamics; adaptation and fading memory describe the persistence of recent history; plastic changes in coupling introduce a slower form of network memory; and growth or pruning can alter architecture on still longer time scales. Intelligence, if the term is useful here, is expected to arise from interactions among these scales rather than from any single one. This is consistent with recurrent and reservoir-computing theory, where useful computation depends on a compromise between nonlinear transformation, memory and stability \citep{Jaeger2001,Maass2002,LukoseviciusJaeger2009,Dambre2012}. Longer memory is not automatically advantageous, because excessive persistence can reduce sensitivity to changed inputs; likewise, stronger nonlinearity can improve separation while eventually destroying reproducibility.

The same compromise appears in collective synchronisation. Kuramoto's self-entrainment theory showed how increasing interaction strength can drive a population of oscillators from incoherence towards collective phase organisation \citep{Kuramoto1975,Kuramoto1984}. The Kuramoto order parameter is therefore useful here not because proteinoid microspheres are assumed to be ideal phase oscillators, but because it supplies a coarse measure of the degree to which local oscillatory dynamics become collectively coordinated. The extensive theory of the Kuramoto model confirms that the transition depends on frequency distributions, coupling architecture and heterogeneity \citep{Acebron2005,Arenas2008,Rodrigues2016}. Of particular relevance, Kuramoto and Battogtokh demonstrated that nonlocal coupling can sustain coherent and incoherent domains simultaneously \citep{KuramotoBattogtokh2002}. Such partially synchronised regimes are conceptually attractive for material intelligence: coherent regions can support integration and propagation, whereas incoherent or weakly coherent regions preserve state diversity. The experimentally relevant prediction is therefore not that maximal synchronisation maximises proteinoid intelligence, but that useful information processing may occur in an intermediate or spatially heterogeneous coordination regime.

This interpretation also connects the proposed critical-regime conjecture with broader work on excitable networks. Studies of neuronal avalanches and excitable-network models have suggested that regimes close to collective transitions can support large dynamic range and rich information transmission \citep{BeggsPlenz2003,KinouchiCopelli2006,ShewPlenz2013}. These results should not be transferred uncritically to proteinoids: the microscopic physics, geometry and noise sources are different. Nevertheless, they motivate a concrete experimental programme in which coupling, stimulus strength and network density are varied while synchronisation, response diversity, mutual information and task performance are measured simultaneously. A peak in one observable alone would be insufficient; the stronger result would be a reproducible region in which several information-processing measures improve without collapse into global uniformity.

Memory provides a second boundary between simple responsiveness and richer adaptive behaviour. The exponential memory variable used here is deliberately minimal and corresponds to a fading-memory process in which the influence of remote inputs decays with time \citep{BoydChua1985}. Reservoir-computing theory demonstrates why such fading memory can be computationally valuable: the present state can contain a compressed trace of recent input history while remaining responsive to new stimulation \citep{Jaeger2001,Maass2002}. Memory-capacity measures make this idea experimentally testable rather than rhetorical \citep{Jaeger2002memory,White2004,Dambre2012}. In proteinoid experiments, the important comparison is therefore between isolated microspheres and coupled ensembles under identical input histories. A collective memory advantage would require the network state to reconstruct past inputs more accurately or over longer delays than could be achieved by individual components.

Adaptation and plasticity introduce a stronger criterion. A response that merely relaxes after stimulation can be described by ordinary dynamical memory; a system that changes its future input--output relation as a consequence of repeated experience exhibits a form of adaptive memory. The correlation-dependent coupling laws introduced here are phenomenological analogues of activity-dependent plasticity rather than claims that proteinoids implement a biological Hebbian mechanism \citep{Hebb1949,GrossBlasius2008}. Their purpose is to generate falsifiable consequences: repeated use of a pathway should alter effective coupling, information flow, delay, attenuation or later task performance. If no persistent pathway-dependent change is observed, structural plasticity should be removed from the model rather than retained as a metaphor.

Information-theoretic quantities are essential for making these distinctions. Mutual information tests whether the response carries information about stimulation \citep{Shannon1948,CoverThomas2006}; transfer entropy asks whether one recorded component improves prediction of another beyond the target's own history \citep{Schreiber2000,Bossomaier2016}; active information storage measures predictive information retained locally \citep{Lizier2012}. None of these quantities alone establishes intelligence. Their role is to decompose a distributed dynamical process into experimentally measurable functions. In particular, high transfer entropy without task performance may indicate communication but not useful computation, whereas high classification accuracy obtained from a static stimulus artefact would not demonstrate memory or adaptation. The intelligence functional is intended to prevent such single-metric interpretations.

The reservoir-computing interpretation provides a practical benchmark for future experiments. If proteinoid networks are useful computational substrates, their transient electrical states should separate distinct input histories while retaining sufficient fading memory for a simple readout to reconstruct or classify those histories \citep{Jaeger2001,Maass2002,LukoseviciusJaeger2009,Tanaka2019,Nakajima2020}. Crucially, the readout should remain simpler than the substrate dynamics; otherwise the apparent computational performance may reside predominantly in post-processing. Comparisons should therefore include linear readouts, shuffled or surrogate controls, isolated microspheres, fixed versus adaptive networks and conventional nonlinear baselines. Demonstrating that a simple readout systematically extracts information unavailable from individual elements would provide much stronger evidence for collective material computation than visually complex voltage traces.

Energetic cost is equally important. Physical computation cannot be separated from dissipation and resource constraints \citep{Landauer1961,Bennett1982}. The framework therefore does not reward unlimited activity. A proteinoid network that doubles classification performance at the cost of a hundredfold increase in energy or material complexity may be less interesting than a quieter network that achieves moderate performance efficiently. This motivates reporting information-per-energy and performance-per-material measures alongside absolute task scores. It also provides a possible connection between primitive intelligence and physical economy: adaptive organisation may favour states that exploit the intrinsic dynamics of matter rather than imposing energetically expensive control.

Several parts of the framework remain phenomenological and require direct experimental validation. The precise state variables governing proteinoid excitability have not yet been identified, the physical mechanisms by which electrical activity is transmitted between microspheres are not fully established, and the relation between electrical dynamics, chemical transport and structural reorganisation remains uncertain. For these reasons the equations should be read as candidate mathematical laws and model classes, not as a claim that one microscopic mechanism has already been established. The FitzHugh--Nagumo, reaction--diffusion and Kuramoto descriptions are phenomenological reductions selected because they provide experimentally interpretable parameters and established analytical machinery \citep{FitzHugh1961,Nagumo1962,Murray2002,KeenerSneyd2009,Kuramoto1975,Kuramoto1984}. Their value lies in making assumptions explicit. Competing local models can be fitted to the same recordings, alternative coupling laws can be compared, and predictions about attenuation, recovery, memory and network advantage can be rejected if the data do not support them.

The framework also makes clear what would count as evidence for a genuinely collective effect. A network should not be judged superior merely because it contains more microspheres or produces a larger signal. The relevant question is whether coupling creates access to information, memory or task performance that is unavailable to isolated elements under matched conditions. This is why the joint readout, transfer measures, memory capacity and perturbation experiments are central to the proposed programme. Complex-network theory also predicts that topology can influence synchronisation, robustness and information pathways \citep{Newman2003,Boccaletti2006,Albert2000}. Systematic comparison of chains, lattices, random geometric networks, small-world networks and experimentally reconstructed assemblies should therefore distinguish effects of mere network size from effects of organisation \citep{ErdosRenyi1959,WattsStrogatz1998,BarabasiAlbert1999}.

The broader implication is that neurons need not be assumed in advance as the minimal substrate of adaptive information processing, a position increasingly explored in work on basal cognition, unconventional computation and material intelligence \citep{Lyon2015,Levin2019,FieldsLevin2020,Stepney2018,Nakajima2020}. If proteinoid microspheres can encode environmental variables, preserve traces of previous activity, modify their effective couplings and exploit those changes to alter future task performance, then a restricted but experimentally defensible form of intelligence can be discussed at the level of excitable chemical matter. Such a result would not erase the enormous difference between proteinoid networks and biological nervous systems. Rather, it would identify a set of dynamical primitives---excitability, memory, coupling, information integration, adaptation and constrained performance---from which more elaborate forms of intelligence may be constructed.

This interpretation also suggests an origins-of-intelligence question distinct from the classical origins-of-life question. Proteinoids were historically investigated because simple physicochemical processes can generate organised microspherical structures \citep{FoxHarada1958,FoxHarada1960,Fox1965,Fox1992}. If such structures additionally support nonlinear electrical dynamics and history-dependent collective responses, then self-assembly and primitive information processing need not be treated as completely separate stages. The experimentally important question becomes whether increasing organisation produces a measurable sequence from reactivity to discrimination, from discrimination to memory, and from memory to adaptive task performance. The hierarchy proposed in this paper is intended to make that sequence testable.

Ultimately, the significance of proteinoid intelligence will depend less on terminology than on comparative experiments. The strongest evidence would consist of reproducible stimulus encoding, measurable fading memory, experience-dependent modification, network-level information gain, robustness to perturbation and task performance that exceeds matched controls, all obtained under explicit energetic and structural constraints. Conversely, failure to demonstrate these properties would delimit the computational capabilities of the substrate and refine the theory. In either case, proteinoid networks provide a useful experimental system in which claims about primitive intelligence can be converted from analogy into quantitative hypotheses.

\section{Conclusions}

We have formulated information processing in proteinoid networks as an
applied problem in nonlinear dynamics, adaptive networks and information
theory. Proteinoid microspheres are represented as nonlinear excitable
elements embedded in adaptive weighted graphs, allowing local excitation,
signal propagation, attenuation, recovery, memory, structural plasticity,
information transfer and computation to be treated within a common
dynamical description.

Proteinoid intelligence is defined as a task-relative combination of sensing, memory, adaptation, information integration, functional performance and robustness under energetic and structural constraints. A quantitative
intelligence functional provides a basis for comparing individual microspheres, networks, compositions and experimental conditions.

The theory leads to explicit predictions concerning propagation,
history-dependent responses, network memory, optimal coupling, adaptive reinforcement, critical dynamics and information--energy trade-offs. Its value will ultimately be determined through confrontation with experimental data. If supported, the framework would contribute to a broader mathematical 
understanding of intelligence as a property of excitable matter rather than an exclusive product of nervous systems.

% Acknowledgements

% Data availability

\section*{Data Availability}

No new experimental data were generated or analysed in the present
theoretical study.

% Conflict of interest

\section*{Conflict of Interest}

The authors declare no conflict of interest.

% References

%\bibliographystyle{unsrtnat}
%\bibliography{proteinoid-intelligence-expanded}

\end{document}